\documentclass{article}

\usepackage{paper107}

\title{Automating proof search when equality is a logical connective}

\author{
  Kaustuv Chaudhuri \& Arunava Gantait \& Dale Miller
  \\
  \smaller[1]{Inria \& LIX, Institut Polytechnique Paris, France}
}

\begin{document}
\maketitle

\begin{abstract}
Treating syntactic equality as a logical connective---governed by
left- and right-introduction rules within the sequent
calculus---offers an elegant and powerful approach to term
identity. This treatment of equality allows for the derivation of core
mathematical principles, such as Peano's axioms (excluding induction),
and serves as a foundation for the Abella interactive proof
assistant. However, integrating this equality into automated proof
search remains challenging. We present a proof search procedure that
extends unification to handle the complexities of quantifier
alternation and equations that occur in both positive and negative
occurrences. While established logical frameworks such as \lP and LF
lack direct support for this kind of equality, our procedure enables a
lightweight logical framework that addresses this gap. Our system
enables unification-aware proof search across a diverse range of
first-order sequent calculi that can directly use this form of
equality.
\end{abstract}

\section{Introduction}
\label{sec:intro}

Logical frameworks~\cite{pfenning01handbook} typically employ a term language based
on the $\lambda$-calculus equipped with a notion of equality (written $=$)
derived from $\alpha\beta\eta$-conversion.
Implementing such a logical framework requires the ability to reason about this equality
in the presence of unknowns, i.e., over open terms.
For instance, \emph{unification problems} can be represented as a finite list of
(disagreement) pairs of terms $[\tup{t_1,s_1},\dotsc\tup{t_n,s_n}]$, where all
free variables are contained in the list $\bar x = x_1,\dotsc, x_m$.
A solution to such a unification problem is a substitution $\theta$ for the
variables $\bar x$ such that $t_i\theta=s_i\theta$, for every
$i\in\{1,\dotsc,n\}$.
Such a unification problem can therefore be formally depicted as the logical formula:
\begin{gather*}
  \exists x_1.\, \dotsm \exists x_m.\,
  \mleft( t_1 = s_1 \land \dotsm \land t_n = s_n \mright).
\end{gather*}

In logical frameworks supporting \emph{proof search}, such as
\lP~\cite{miller12proghol} and Twelf~\cite{pfenning99cade}, these problems are
generalized to include two kinds of variables: \emph{logic variables}, which
arise from existential quantification and must be instantiated with terms, and
\emph{eigenvariables} (aka \emph{universal variables} and \emph{parameters}),
which arise from universal quantification and are treated as pairwise distinct.
These generalized problems take the form:
\begin{gather*}
  \Qscr_1 x_1.\, \dotsm \Qscr_m x_m.\,
  \mleft( t_1 = s_1 \land \dotsm \land t_n=s_n \mright)
\end{gather*}
where each $\Qscr_i$ is either $\forall$ or $\exists$.
Frameworks for solving unification problems of this ``mixed prefix'' form were
developed in the early
1990s~\cite{miller91jlc,miller92jsc,nadathur93jar,pfenning91lf}.

In the aforementioned frameworks, equality arises during the process of logical
reasoning within the framework, but equality is not itself reified as a logical
connective.
However, Girard~\cite{girard92mail} and
Schroeder-Heister~\cite{schroeder-heister93lics} both proposed, in the early
1990s, a proof-theoretically elegant approach to first-order term equality as a
logical connective, in the sense that it was given left- and right-introduction
rules in a sequent calculus formulation that worked well with cut-elimination.
This treatment was subsequently extended to deal with simply-typed
$\lambda$-terms~\cite{mcdowell00tcs,miller05tocl} and was implemented in the
Bedwyr model checker~\cite{baelde07cade} and the Abella proof
assistant~\cite{baelde14jfr}.
Motivated by the automation of proof search in Abella, we find it necessary to
address unification problems that are more expressive than standard mixed
prefix unification problems.
We develop this richer unification setting in the present paper.

When equality is treated as a logical connective, each conjunct in the
mixed prefix unification problem may involve a collection of equality
antecedents, so the problems now have the following shape:
\begin{gather*}
  \Qscr_1 x_1.\, \dotsm\ \Qscr_m x_m.\,
  \bigwedge_{i \in \{1, \dotsc, n\}}
  \mleft(t_{i,1} = s_{i,1} \imp \dotsm \imp
  t_{i,k_i} = s_{i, k_i} \imp
  t_i = s_i \mright).
\end{gather*}
We also must restrict the provability of such formulas to intuitionistic (and
not classical) logic.
In order to increase the utility of our framework, we also permit atomic
formulas defined via logic-programming-style \emph{definitions}.
We use these to define derivability of essentially any first-order \emph{object
logic} by defining provability in some sequent calculus via a recursive
definition, as detailed in Section~\ref{sec:lwlf}.
The inclusion of atomic formulas requires us to consider 
search problems of the following shape: 
\begin{gather*}
  \Qscr_1 x_1.\, \dotsm \Qscr_m x_m.\,
  \bigwedge_{i \in \{1, \dotsc, n\}}
  \mleft(t_{i,1} = s_{i,1} \imp \dotsm \imp
  t_{i,k_i} = s_{i, k_i} \imp
  H_i \mright),
\end{gather*}
where each $H_i$ is either an equality or a (defined) atom.
Automating problems of this shape is a significant challenge.
Derivability for such formulas is undecidable even when restricted to
first-order terms and empty definitions.
Furthermore, reasoning about equality in the heads $H_i$ necessitates
backtracking search (e.g. via Huet's algorithm~\cite{huet75tcs}), which is only
a semi-decision procedure.
Due to the mixed prefix, neither the positive nor the negative occurrences
of equalities can be fully solved in isolation; instead, partial solutions must
be communicated between the conjuncts of the problem, so that a conjunct that is
otherwise stuck may continue to ensure progress.

In this paper, we introduce a proof search procedure designed to
manage complex unification scenarios involving quantifier alternation and
equations in both positive and negative occurrences. We extend this procedure to
incorporate atomic formulas defined via logic programming-style
specifications. This extension yields SLIM (Specification Logic based on an
Intuitionistic Metalogic), a lightweight logical framework developed to provide
the direct support for equality often absent in established systems. Finally, we
present a sound and relatively complete non-deterministic algorithm for proof
search in SLIM sequents, formalized as a state transition system. We conclude by
discussing how SLIM can be leveraged to provide the Abella system with more
robust automated proof search.

\section{The \eqLJomega logic and its proof system}

This section will present an intuitionistic logic $\eqLJomega$ that is a subset
of Church's Simple Theory of Types~\cite{church40} and provide it with a proof
system based on Gentzen's \LJ proof system~\cite{gentzen35}.
The subscript $\omega$ in the name \eqLJomega indicates that typed quantified
variables can be of arbitrary order.
Eventually we will identify a subset of this logic that we shall consider as our
lightweight logical framework: in that subset, the order of typed quantified
variables will be limited to be 0 or 1.

\subsection{Types, terms, and formulas}

Types (written $\sigma$ or $\tau$) consist of \emph{primitive types} including
the primitive type $o$ of \emph{formulas} (following Church's convention) and
other primitive types that are declared when we have a need for them; and
\emph{arrow types} $\sigma \ra \tau$ with $\ra$ being right-associative.
The \emph{order} of a type $\sigma$, written $\ord(\sigma)$, is defined to be
$0$ if $\sigma$ is a primitive type, and
$\ord(\sigma \ra \tau) = \max(\ord(\sigma) + 1, \ord(\tau))$.
For a type $\sigma = \sigma_1 \ra \dotsm \ra \sigma_n \ra \tau$ where $\tau$ is
a primitive type, we say that $\sigma_1, \dotsc, \sigma_n$ are the
\emph{argument types} of $\sigma$ and $\tau$ is its \emph{target type}.
Note that if $\ord(\sigma) \le 1$, then every argument type of $\sigma$ is a
primitive type.
These types are inhabited by $\lambda$-terms with the usual syntax and typing
rules, which we elide in this paper.

Following Church, logic is introduced to this simple type theory by declaring
the following logical constants, together with their associated types:
\begin{gather*}
  \true, \false : o,
  \quad
  {\wedge}, {\vee}, {\imp} : o \ra o \ra o,
  \quad
  % \false\colon o,~
  % \vee\colon o\ra o\ra o,~\imp\colon o\ra o\ra o,~\\
  =_\sigma\colon \sigma\ra\sigma\ra o,
  \quad
  {\forall_\sigma}, {\exists_\sigma} : (\sigma \ra o) \ra o.
\end{gather*}
We will write $\wedge$, $\vee$, $\imp$, and $=_\sigma$ infix, with the usual
conventions.
We will also abbreviate $\forall_\sigma (\lambda x.\,F)$ and
$\exists_\sigma (\lambda x.\,F)$ as $\forall_\sigma x.\, F$ and
$\exists_\sigma x.\, F$ respectively.
Finally, we will drop the type-suffixes on quantifiers and equality if they
are clear from context or unimportant to the exposition.
These logical constants will be given a meaning in terms of inference rules in a
sequent calculus in the next subsection.

The logical constants $\forall_\sigma$, $\exists_\sigma$, and $=_\sigma$ will
have a restriction that the type $\sigma$ cannot contain $o$.
Thus, we will obtain a \emph{first-order logic over higher-order
  $\lambda$-terms}.
Proof-theoretic meta-theorems about this logic can be proved by elementary
induction in the style of Gentzen's cut-elimination theorem~\cite{gentzen35},
and will not require stronger reasoning principles such as reducibility
candidates~\cite{girard89book}.

\subsection{The sequent calculus for \eqLJomega}

\begin{figure}[t]
\begin{gather*}
  \linfer[\init]{
    \Pref\twoseq{A, \Gamma}{A}
  }{}
  \qquad
  \linfer[{\true}R]{
    \Pref\twoseq{\Gamma}{\true}
  }{ }
  \qquad
  \linfer[{\true}L]{
    \Pref\twoseq{\true, \Gamma}{C}
  }{
    \Pref\twoseq{\Gamma}{C}
  }
  \\[1ex]
  \infer[{\wedge}R]{
    \Pref\twoseq{\Gamma}{B\wedge C}
  }{
    \Pref\twoseq{\Gamma}{B}
    &
    \Pref\twoseq{\Gamma}{C}
  }
  \qquad
  \infer[{\wedge}L]{
    \Pref\twoseq{A \wedge B, \Gamma}{C}
  }{
    \Pref\twoseq{A, B, \Gamma}{C}
  }
  \\[1ex]
  \infer[{\vee}R_i, i \in \{1, 2\}]{
    \Pref\twoseq{\Gamma}{C_1 \vee C_2}
  }{
    \Pref\twoseq{\Gamma}{C_i}
  }
  \qquad
  \infer[{\vee}L]{
    \Pref\twoseq{A \vee B, \Gamma}{C}
  }{
    \Pref\twoseq{A, \Gamma}{C}
    &
    \Pref\twoseq{B, \Gamma}{C}
  }
  \\[1ex]
  \text{(no ${\false}R$)}
  \qquad
  \linfer[{\false}L]{
    \Pref\twoseq{\false, \Gamma}{C}
  }{ }
  \\[1ex]
  \infer[{\imp}R]{
    \Pref\twoseq{\Gamma}{B \imp C}
  }{
    \Pref\twoseq{B, \Gamma}{C}
  }
  \qquad
  \infer[{\imp}L]{
    \Pref\twoseq{A \imp B, \Gamma}{C}
  }{
    \Pref\twoseq{A \imp B, \Gamma}{A}
    &
    \Pref\twoseq{B, A \imp B, \Gamma}{C}
  }
  \\[1ex]
  \infer[{\forall}R]{
    \Pref\twoseq{\Gamma}{\forall_\sigma x.\, C}
  }{
    \Prefe{y{:}\sigma}\twoseq{\Gamma}{C[y/x]}
  }
  \qquad
  \infer[{\forall}L]{
    \Pref\twoseq{\forall_\sigma x.\, A, \Gamma}{C}
  }{
    \Sigma \vdash t : \sigma
    &
    \Pref\twoseq{A[t/x], \forall_\sigma x.\, A, \Gamma}{C}
  }
  \\[1ex]
  \infer[{\exists}R]{
    \Pref\twoseq{\Gamma}{\exists_\sigma x.\, C}
  }{
    \Sigma \vdash t : \sigma
    &
    \Pref\twoseq{\Gamma}{C[t/x]}
  }
  \qquad
  \infer[{\exists}L]{
    \Pref\twoseq{\exists_\sigma x.\, A, \Gamma}{C}
  }{
    \Prefe{y{:}\sigma}\twoseq{A[y/x], \Gamma}{C}
  }
  \\
  \infer[{=}R]{
    \Pref\twoseq{\Gamma}{t = t}
  }{ }
  \qquad
  \infer[{=}L]{
    \Pref\twoseq{s = t, \Gamma}{C}
  }{
    \mleft\{
    \Pref[\Sigma\theta]\twoseq{\Gamma\theta}{C\theta}
    \sep
    \theta \in \csu{s, t}
    \mright\}
  }
\end{gather*}
\caption{The inference rules for the \eqLJomega proof system}
\label{fig:eqLJ}
\end{figure}

The system \eqLJomega is a standard sequent calculus in the style of Gentzen's
\LJ~\cite{gentzen35} that gives the meaning for the logical constants defined
above, with the rules given in figure~\ref{fig:eqLJ}.
The main differences from \LJ are as follows:
\begin{itemize}
\item{} Sequents in \eqLJomega will be of the form $\Pref\twoseq{\Gamma}{C}$,
  where $\Gamma$ (the assumptions) is a set of formulas, $C$ is a formula, and
  $\Sigma$ is explained below.
  The context $\Gamma$ of assumptions is treated as a set instead of a multiset
  or a list.
  Thus, the structural rules of weakening, contraction, and exchange are
  admissible and therefore unnecessary.
\item The typed eigenvariables introduced in the ${\forall}R$ and ${\exists}L$
  rules are collected in a typing context $\Sigma$ prefix.
  For such typing contexts, we assume the existence of a typing judgment
  $\Sigma \vdash t : \sigma$ for asserting that the term $t$ has type $\sigma$,
  with the usual rules (omitted here).
  This judgment is used to type-check the witness terms used in the ${\forall}L$
  and ${\exists}R$ rules.
\end{itemize}
The rules ${=}L$ and ${=}R$ for equality were inspired
by~\cite{girard92mail,schroeder-heister93lics}, but were generalized
in~\cite{mcdowell00tcs} in order to work also for a logic in which terms are
simply typed $\lambda$-terms.
While ${=}R$ is simple and natural, the ${=}L$ rule is more subtle.
In that rule, $\csu{s, t}$ denotes the \emph{complete set of unifiers} for $s$
and $t$, $\Sigma\theta$ stands for the typing context for the image under
$\theta$ of the eigenvariables in $\Sigma$, and $\Gamma\theta$ stands for the
pointwise image under $\theta$ of each element of $\Gamma$.
In the fully general setting of the simply typed $\lambda$-calculus, this set
$\csu{s, t}$ can be empty (i.e., $s$ and $t$ are not unifiable), non-empty and
finite, or infinite.
Thus, if $s$ and $t$ are not unifiable, the sequent
$\Pref\twoseq{s=t,\Gamma}{C}$ is proved since the set of premises for ${=}L$ is
empty.

On the other hand, we might need to consider instances of ${=}L$ with infinitely
many premises.
Thus, proofs in this rich setting can be difficult to study, requiring
transfinite induction to establish some basic definitions and meta-theory
results.
We will be avoiding these issues soon since we will select a small subset of
this logic and proof system in which we know that ${=}L$ will only have 0 or 1
premises.

The proof theory of all of \eqLJomega has been developed
in~\cite{mcdowell00tcs}, where a larger logic $\FOLDN$ containing \eqLJomega is
shown to admit a cut-elimination theorem.

\begin{example}\label{ex:equiv=}
  Given $\Sigma \vdash C : o$, we say that $C$ is provable if the sequent
  $\Pref\twoseq{\cdot}{C}$ is derivable in \eqLJomega.
  The following closed formulas, which state that equality is an equivalence,
  are all provable in \eqLJomega:
  $\forall x.\, x = x$, $\forall x.\, \forall y.\, \mleft(x = y \imp y = x\mright)$,
  $\forall x.\, \forall y.\, \forall u.\, \mleft(x = y \imp y = u\imp x = u\mright)$.
\end{example}

\begin{example}\label{ex:succ}
  Assume that $\nat$ is a primitive type and that $z$ (zero) has type $\nat$
  and $s$ (successor) has type $\nat\ra\nat$.  The formulas
  $\forall x.\, \forall y.\, \mleft(s\ x = s\ y \imp x = y\mright)$ and
  $\forall x.\, \mleft(s~x = z \imp \false\mright)$, 
  which state that equality is injective and a successor is never zero, are
  provable in \eqLJomega.
\end{example}

By putting these two examples together, we see that the proof system \eqLJomega forms
a basis for (Heyting) arithmetic since it proves the first four Peano axioms.
The fifth axiom (involving induction) is not part of this system.

In subsets of this logic that do not contain equality (such as Gentzen's
\LJ~\cite{gentzen35}), the roles of eigenvariables of primitive type and
constructors of arity 0 are essentially interchangeable.
For example, assume that $a, b : i$ are two distinct constants for some
primitive type $i$.
Then the $\Sigma\cup\{a,b\}$-formula $P$ has an \eqLJomega-proof if and only if
$\forall x.\, \forall y.\, P[x/a, y/b]$ has an \eqLJomega-proof.
Given $a, b{:}i, \Sigma \vdash P : o$, it is the case that if $P$ is provable in
\eqLJomega then $c{:}i, \Sigma \vdash P[c/a, c/b] : o$ and $P[c/a, c/b]$ is
provable in \eqLJomega.
In other words, constants can be seen as playing a generic role.
In the full logic including equality, this interchangeability no longer holds.
For example, the formula $a = b \supset\false$ is provable but replacing both
$a$ and $b$ with, say, $c$ yields a formula that is not provable.

\section{SLIM: a light-weight logical framework}
\label{sec:lwlf}

Since our main goal with using \eqLJomega is to specify sequent calculus proof
systems for first-order logics (that may use equality), we can work with a rather
small subset of \eqLJomega.
In particular, we will dispense with disjunction ($\vee$) entirely, and restrict
the antecedents of implications to equalities.
This smaller logic will nevertheless be sufficient to specify a rich set of
proof systems for a wide range of object logics, as we will illustrate later.

Our light-weight logical framework, SLIM (for ``Specification Logic based on an
Intuitionistic Metalogic'') is motivated by higher-order logic programming in
the style of \lP~\cite{miller12proghol}.
Following the terminology of logic programming, we define two classes of
\eqLJomega formulas: goal formulas ($G$) and definite clauses ($D$).
\begin{align*}
G := &\ \true \sep \false \sep A \sep G_1\wedge G_2\sep
        \exists_\tau x.\, G\sep \forall_\sigma x.\, G \sep t = s\sep t = s \imp G \\
D := &\ G\imp A \sep \forall_\tau x.\, D
\end{align*}
Here, $t$ and $s$ denote $\lambda$-terms as before, and $A$ denotes an atomic
formula, which is any formula of the form $p\ t_1\ \dotsm\ t_n$ where $p$ is a
predicate (i.e., a non-logical constant whose target type is $o$) and
$t_1, \dotsc, t_n$ are terms.
We also make the following restrictions on the types of the quantifiers: in the
goal formula $\forall_\sigma x.\, G$, $\sigma$ must be a primitive type (order
0) while in the goal formula $\exists_\tau.\, G$ or the definite clause
$\forall_\tau x.\, D$ the type $\tau$ is of order 0 or 1.
In order to reason in SLIM, we will primarily consider \eqLJomega sequents of
the form $\Pref\twoseq{\Pscr}{G}$ where $G$ is any goal formula and the context
$\Pscr$ contains only \emph{closed} definite clauses (the \emph{logic program}).
Given the typing restrictions in SLIM goal formulas, we will only need to
consider signatures, $\Sigma$, that map eigenvariables to 
primitive types.

Note that even though SLIM has few logical building blocks, its proof
search problem is undecidable.
This is the case even for the fragment with an empty logic program and no atoms
by reduction to Hilbert's 10th problem~\cite{miller22amai}.
(Elaborated further in
Appendix~\ref{apx:empty-lp}%
.)

\subsection{Logic program specifications of proof systems}
\label{sec:systems}

While SLIM is a weak framework, the ability to attach logic specifications to it
makes it possible to specify a large variety of sequent calculi.
Indeed, specifying sequent calculus proof systems using definite clauses is easy
to do using techniques dating back to the late
1980s~\cite{felty88cade,miller12proghol}.
For example, let \eqLJone be the first-order logic that results from requiring
all quantification within formulas to be of primitive types, and now consider
\eqLJone as an \emph{object logic}.  In that setting, we could interpret the
${\wedge}R$ rule as a rule on a new atomic formula built using the predicate
$\lsti|and|$, whose inference rule looks like this:
\begin{gather*}
  \infer{
    \Gamma\vdash \lsti|and|\ B\ C
  }{
    \Gamma\vdash B
    &
    \Gamma\vdash C
  }
\end{gather*}
where $\Gamma$ is now interpreted not as a set of formulas of type $o$ but
rather as a container (say a list) of terms of type \lsti|fm| (say).
The rule itself is easy to write in \lP using primitive types \lsti|fm| and
\lsti|fmlist| for formulas and contexts, a constant
$\lsti|and| : \lsti|fm| \ra \lsti|fm| \ra \lsti|fm|$, a non-logical constant
$\lsti|seq| : \lsti|fmlist| \ra \lsti|fm| \ra o$, and the following clause
(written using \lP syntax):
\begin{lstlisting}
  seq Gamma (and B C) :- seq Gamma B, seq Gamma C.
\end{lstlisting}
We therefore use logical formulas at two levels.
The object logic eventually contains formulas of type \lsti|fm| built using
constants such as \lsti{and}.
The \emph{framework logic}, which is the role that SLIM plays, is the logic that
represents the clause above as the following definite clause:
\begin{gather*}
  \forall \Gamma.\, \forall B.\, \forall C.\,
  \mleft((\lsti{seq}\ \Gamma\ B \wedge
  \lsti{seq}\ \Gamma\ C)
  \imp
  \lsti{seq}\ \Gamma\ (\lsti{and}\ B\ C)\mright).
\end{gather*}

Logical frameworks based on the intuitionistic logic of hereditary Harrop
formulas~\cite{miller91apal,miller12proghol} or on dependently typed
$\lambda$-calculus~\cite{harper93jacm,pfenning99cade}, can support this kind of
specification of object logics -- \emph{except} when it comes to equality
itself, which these frameworks do not provide as a usable connective at the
framework-level.
Thus, to define \eqLJone as an object logic requires a framework such as
SLIM.\footnote{%
  For an example of another and rather different proof system, see
  Appendix~\ref{apx:linear}.
}

\begin{figure}[tp]
\begin{lstlisting}[language=lprolog,basicstyle={\ttfamily}]
kind tm, atm, fm      type.             % Terms, atoms, formulas
type atom             atm -> fm.        % Atoms
type tt, ff           fm.               % Units
type or, and, imp     fm -> fm -> fm.   % Binary connectives
type forall, exists   (tm -> fm) -> fm. % Quantifiers
type eq               tm -> tm -> fm.   % Equality

kind fmlist  type.
type nil     fmlist.
type ::      fm -> fmlist -> fmlist. % written infix

type mnr     fm -> fmlist -> fmlist -> oo. % "member and rest"
  % mnr X G D means G = X, D (modulo ordering)

mnr X (X :: L) L.
mnr X (Y :: L) (Y :: K) :- mnr X L K.

type seq     fmlist -> fm -> oo.

% initial
seq Gam (atom A) :- mnr (atom A) Gam _.

% right rules
seq Gam tt.
seq Gam (eq T T).
seq Gam (and B C)  :- seq Gam B, seq Gam C.
seq Gam (or  B C)  :- seq Gam B.
seq Gam (or  B C)  :- seq Gam C.
seq Gam (imp B C)  :- seq (B :: Gam) C.
seq Gam (exists C) :- sigma t\ seq Gam (C t).
seq Gam (forall C) :-    pi y\ seq Gam (C y).

% left rules
seq Gam C :- mnr ff Gam _.
seq Gam C :- mnr (eq S T) Gam Del,   (S = T) => seq Del C.
seq Gam C :- mnr (and A B) Gam Del,  seq (A :: B :: Del) C.
seq Gam C :- mnr tt Gam Del,         seq Del C.
seq Gam C :- mnr (or A B) Gam Del,   seq (A :: Del) C, seq (B :: Del) C.
seq Gam C :- mnr ff Gam _.
seq Gam C :- mnr (imp A B) Gam _,    seq Gam A, seq (B :: Gam) C.
seq Gam C :- mnr (forall A) Gam _,   sigma t\ seq (A t :: Gam) C.
seq Gam C :- mnr (exists A) Gam Del,    pi y\ seq (A y :: Del) C.
\end{lstlisting}
\caption{A faithful specification of \eqLJone as an object logic (compare with
  Figure~\ref{fig:eqLJ})} 
\label{fig:eqlj-mod}
\end{figure}

\begin{example}[\eqLJone as an object logic]
  The type declarations (using \lP syntax) in Figure~\ref{fig:eqlj-mod} declare
  the formulas of an object logic which are here treated as terms of type
  \lsti{fm}.
  For the connective $=_\sigma$ and quantifiers $\forall_\sigma$ and
  $\exists_\sigma$ of \eqLJomega, we provide examples for the single primitive
  type \lsti|tm|.\footnote{%
    In \lP these could be written using polymorphic constants, but we avoid the
    complication of polymorphic types in this paper.%
  }
  Figure~\ref{fig:eqlj-mod} lists all the clauses specifying the rules of \eqLJomega
  in Figure~\ref{fig:eqLJ} using \lP.
  For instance, the clause corresponding to ${\exists}R$ can be seen as the following
  definite clause:
  \begin{gather*}
    \forall \Gamma.\, \forall C.\,
    \mleft(\exists t.\, \lsti{seq}\ \Gamma\ (C\ t)\mright)
    \imp \lsti{seq}\ \Gamma\ (\lsti{exists}\ C),
  \end{gather*}
  where $\Gamma$ has type \lsti|fmlist| and $C$ has type
  $\lsti|tm| \ra \lsti|fm|$, both types of order 1 or less.

  Note that this specification represents contexts as lists (the type
  \lsti|fmlist|) and uses an auxiliary predicate \lsti|mnr| to reason about
  membership in this list.
  For instance, the clause for ${=}L$ is the following definite clause is:
  \begin{gather*}
    \forall \Gamma, \Gamma'.\, \forall C.\, \forall S, T.\,
    \mleft(\lsti|mnr|\ (\lsti|eq|\ S\ T)\ \Gamma\ \Gamma'
    \wedge
    \mleft((S = T) \imp \lsti|seq|\ \Gamma'\ C)\mright)
    \mright) \imp
    \lsti|seq|\ \Gamma\ C.
  \end{gather*}
  The \lsti|mnr| predicate is used to remove $\lsti|eq|\ S\ T$ from $\Gamma$
  (yielding $\Gamma'$), and then the SLIM $=$ is used to guard the goal
  $\lsti|seq|\ \Gamma'\ C$, letting $=$ represent \lsti|eq|.
\end{example}

Our goal in the rest of this paper is to describe a non-deterministic algorithm
that provides a sound and complete proof search strategy for SLIM sequents of
the form $\Sigma\Colon\twoseq{\Pscr}{G}$.
Of course a SLIM proof of such a sequent does not necessarily correspond to an
object proof system that may be represented using the SLIM atoms.
Fortunately, there is an easy way to recover the object proofs as well by
enriching predicates such as \lsti|seq| above with an additional argument which
would be the proof term in the object logic; see, for
example,~\cite{felty88cade,chihani17jar}.
Whether or not object proofs are explicitly represented is a choice that is
incorporated into the logic program specification of the object logic.
Indeed, while SLIM sequents may be proved using the sequent calculus, SLIM
itself does not provide any proof terms at the framework level.

\subsection{Solutions}

Let $G$ be a closed goal formula.
For convenience, assume that the bound variables in $G$ are all given distinct
names.
A \emph{$G$-substitution} is a mapping of every occurrence of existential
quantifiers in $G$ to terms such that if $\exists_\tau x$ is mapped to $t$ then
$t$ contains a free variable only if that free variable is universally
quantified by an occurrence of $\forall y$ that has $\exists_\tau x$ in its
scope.
The application of a $G$-substitution $\theta$ to $G$ is the result of
instantiating every subexpression in $G$ of the form $\exists x.\, G'$
with the term associated to it; we write this as $\theta[G]$.

A \emph{solution for $\twoseq{\Pscr}{G}$} is a $G$-substitution such the sequent
$\Pref[\cdot]\twoseq{\Pscr}{\theta[G]}$ is provable in \eqLJomega.
In general, it is undecidable whether or not a given substitution is a solution
for a given sequent, but it is decidable if $G$ has no atomic formulas.

\begin{example}
  Let $i$ be a primitive type, and $a : i$ and $g : i\ra i\ra i$ be constants.
  There are four solutions to the sequent
  \begin{gather*}
    \twoseq{\Pscr}{
      \forall x_1.\, \forall x_2.\, \exists u.\,
      \mleft(x_1 = a \imp x_2 = a \imp u = (g\ a\ a)\mright)
    },
  \end{gather*}
  namely, one solution each when $\exists u$ is instantiated with one of the
  terms $(g\ x_1\ x_2)$, $(g\ a\ x_2)$, $(g\ x_1\ a)$, $(g\ a\ a)$.
  Since this goal formula has no atomic formulas, this result is independent of
  the choice of the logic program $\Pscr$.
\end{example}

\section{State formulas}
\label{sec:state}

We will find it useful to simplify the structure of goal formulas.
A \emph{guarded goal} is a formula of the form
$
  \forall \bar y.\,
  t_1 = s_1 \imp \dotsm \imp t_m = s_m \imp B,
$
where $\bar y$ is a list of variables of primitive type and $B$ is
either an atom, $\false$, or an equality of terms of primitive type.
The formula occurrence $B$ is the \emph{target} of this guarded goal and the
equations $t_1 = s_1$, \dots, $t_m = s_m$ are called its \emph{guarding
  equalities}.
Depending on the form of $B$, we can call such formulas either a \emph{guarded
  atom}, a \emph{guarded equality}, or a \emph{guarded false}.
We drop the quantifier $\forall$ if $\bar y$ is empty, and the implications if
$m = 0$, so a single equality is also a guarded goal.

A \emph{state formula} is a formula of the form $\exists \bar h.\, \true$ or
$\exists \bar h.\, (G_1 \wedge \dotsm \wedge G_n)$ (for $n > 0$), where $\bar h$
is a list of typed variables of order 0 or 1, and each $G_i$ is a guarded goal.
In such formulas, there are no occurrences of existential quantifiers and
conjunctions in the scope of either universal quantifiers or implications.

We will argue that every SLIM goal formula has a corresponding state formula
that preserves solutions.
A key step is a (generalized) notion of \emph{raising}~\cite{miller92jsc}.

\begin{definition}[raising]
  Rewriting the goal formula
  \begin{gather*}
    \forall\bar y.\, t_1 = s_1 \imp \dotsm \imp t_n = s_n \imp \exists x.\, G
  \end{gather*}
  (where $n\ge 0$) to the goal formula
  \begin{gather*}
    \exists h.\, \forall\bar y.\,
    t_1 = s_1 \imp \dotsm \imp t_n = s_n \imp \subst{(h\ \bar y)}{x}{G}
  \end{gather*}
  is called \emph{raising}, where $h\ \bar y$ stands for the iterated
  application of $h$ to $\bar y$.
  In this rewritten formula, the scope of the existential quantification is
  expanded and the type of the existential quantifier is raised from, say,
  $\tau$ to $\sigma_1\ra\dotsm\ra\sigma_n\ra\tau$, where $\sigma_i$ is the
  (primitive) type of $x_i$ for $i \in \{1,\dotsc,n\}$.
  We write $G \raisesto G'$ if $G'$ results from raising $G$, and $\raisesto*$
  for the reflexive-transitive closure of $\raisesto$.
\end{definition}

\begin{example}\label{ex:one}
The goal formula
  \begin{gather*}
    \forall u.\, \forall v.\, \exists x.\,
    \mleft({
      (u =a \imp v = b \imp x = a)
      \wedge
      (u=b\imp x=u)
    }\mright).
  \end{gather*}
  has exactly one solution, namely, the substitution of $u$ for $x$.
  The raised version of this goal formula is
  \begin{gather*}
    \exists h.\, \forall u.\, \forall v.\,
    \mleft({
      (u = a \imp v = b \imp (h\ u\ v) = a)
      \wedge
      (u = b \imp (h\ u\ v) = u)
    }\mright),
  \end{gather*}
  and it has exactly one solution, namely, the substitution of
  $\lambda w_1.\, \lambda w_2.\, w_1$ for $h$.
\end{example}

\begin{theorem}
  [raising preserves solutions]
  If $G_0 \raisesto* G_1$, then the solutions of the sequent
  $\Pref\twoseq{\Pscr}{G_0}$ are in bijection to those of
  $\Pref\twoseq{\Pscr}{G_1}$.
\end{theorem}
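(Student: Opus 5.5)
By induction on the length of the $\raisesto*$ sequence it suffices to treat a single raising step $G_0 \raisesto G_1$, since bijections compose and the reflexive case is the identity. For one step, the rewritten subformula
\[
\forall\bar y.\, t_1 = s_1 \imp \dotsm \imp t_n = s_n \imp \exists x.\, G
\quad\text{becomes}\quad
\exists h.\, \forall\bar y.\, t_1 = s_1 \imp \dotsm \imp t_n = s_n \imp \subst{(h\ \bar y)}{x}{G},
\]
and it occurs at some position inside $G_0$, under a context $\mathcal{C}[\,]$. I will define explicit maps between $G_0$-substitutions and $G_1$-substitutions and then show that they preserve the property of being a solution.

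\textbf{The maps.} A $G_0$-substitution $\theta$ assigns to $\exists x$ a term $t$ whose free variables are among the universally bound variables scoping over $\exists x$. These are the outer variables $\bar z$ from $\mathcal{C}$ together with $\bar y$. The corresponding $G_1$-substitution $\theta'$ assigns $\lambda\bar y.\, t$ to $\exists h$. This is legitimate, because after raising, $\exists h$ sits directly in $\mathcal{C}$ and so only $\bar z$ scope over it. All other existentials keep their terms, modulo replacing $x$ by $h\ \bar y$ inside them. Conversely, a term $r$ for $\exists h$ has type $\sigma_1\ra\dotsm\ra\sigma_k\ra\tau$, and I map it to $r\ \bar y$ (in $\beta$-normal form) for $\exists x$. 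Since $\beta\eta$-convertible terms are identified, $(\lambda\bar y.\, t)\ \bar y =_\beta t$ and $\lambda\bar y.\, r\ \bar y =_\eta r$, so the two maps are mutually inverse. The other existentials of $G$ lie inside the scope of both $\exists x$ and $\exists h$, so their allowed free variables are the same in $G_0$ and $G_1$. A term for them may mention $x$ only through substitution, and after instantiation nothing changes.

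\textbf{Preservation of solutions.} The key observation is that $\theta[G_0]$ and $\theta'[G_1]$ are the same formula up to $\beta\eta$-conversion. In $G_0$, $\exists x$ is replaced by $t$ at its position. In $G_1$, $\exists h$ is instantiated by $\lambda\bar y.\, t$, and $h\ \bar y$ then $\beta$-reduces to $t$ at exactly the positions where $x$ occurred. Once all existentials are instantiated, no quantifier-scope shift remains: both instantiated formulas read $\mathcal{C}[\forall\bar y.\, t_1=s_1\imp\dotsm\imp \theta[G]]$. Since \eqLJomega works modulo $\beta\eta$, provability of $\Pref[\cdot]\twoseq{\Pscr}{\theta[G_0]}$ and of $\Pref[\cdot]\twoseq{\Pscr}{\theta'[G_1]}$ coincide.

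\textbf{Main obstacle.} The delicate step is the bookkeeping in the definition of $G$-substitution. I must check that the typing and scoping side conditions are exactly matched, namely that $r\ \bar y$ only has free variables that are universally bound over $\exists x$ in $G_0$, and conversely. I must also check the nested case, where existentials inside $G$ had $x$ free in their scope, so that the instantiation order commutes with the $\beta$-reduction. I would handle this by doing induction on the structure of $\mathcal{C}$ and $G$, proving the equality $\theta[G_0] =_{\beta\eta} \theta'[G_1]$ by a substitution lemma for $\beta$-reduction. I would also note explicitly that the $\bar y$ are of primitive type, so $h$ has order at most 1 and stays within SLIM.
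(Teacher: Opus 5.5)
Your proposal is correct and takes essentially the paper's approach: the paper's proof is just the correspondence between a witness $t$ for $\exists x$ (built from the universals in scope, including $\bar y$) and the witness $\lambda\bar y.\,t$ for $\exists h$. You make this precise through the inverse map $r \mapsto r\ \bar y$ modulo $\beta\eta$, the reduction to a single raising step, and the observation that the two instantiated formulas coincide up to $\beta$-conversion. One minor point: your worry about other existentials' terms mentioning $x$ is vacuous, since a $G$-substitution may only use universally bound variables, so those terms are literally unchanged.
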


\begin{proof}
  By definition a solution must substitute an existential with a term built from
  the universally quantified variables in scope.
  Every such replacement of the form $x \mapsto t[t_1/y_1, \dotsc, t_n/y_n]$,
  where $y_1, \dotsc, y_n$ are the universal variables in scope of $\exists x$,
  has a corresponding replacement of the form
  $x \mapsto (\lambda y_1.\, \dotsm \lambda y_n.\, t)\ t_1\ \dotsm\ t_n$ after
  raising, and vice versa.
\end{proof}

\begin{corollary}[normalization]
  For any closed goal formula $G$, there is a state formula $S$ such that
  the solutions of $\cdot\Colon\twoseq{\Pscr}{G}$ are in bijection to those of
  $\cdot\Colon\twoseq{\Pscr}{S}$.
\end{corollary}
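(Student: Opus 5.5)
We obtain $S$ from $G$ by a terminating sequence of rewrites, each preserving solutions up to bijection. The rewrites are raising (using the preceding theorem), together with a few elementary logical transformations that commute the remaining connectives into the shape required of a state formula. We first rename bound variables apart. The transformations used besides raising are:
\begin{itemize}
\item[(i)] $\forall\bar y.\,\vec E\imp (G_1\wedge G_2) \;\leadsto\; (\forall\bar y.\,\vec E\imp G_1)\wedge(\forall\bar y.\,\vec E\imp G_2)$, where $\vec E$ is a sequence of guarding equalities;
\item[(ii)] $\forall\bar y.\,\vec E\imp \true \;\leadsto\; \true$, followed by deleting $\true$ conjuncts (and producing $\exists\bar h.\,\true$ if nothing remains);
\item[(iii)] $(\exists h.\,A)\wedge B \;\leadsto\; \exists h.\,(A\wedge B)$, and symmetrically, which is sound because $h$ is fresh;
\item[(iv)] pulling $\forall y$ out of guards, $\vec E\imp\forall y.\,G \;\leadsto\; \forall y.\,\vec E\imp G$, again with $y$ fresh for $\vec E$.
\end{itemize}
Rule (iv) lets every guarded prefix be put in the form $\forall\bar y.\,\vec E\imp\cdots$ that raising expects.

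For each of (i)--(iv) we check that solutions correspond bijectively. The existential occurrences are the same before and after the rewrite, and their scoping universals are unchanged. Rule (iv) does not change scope, and rule (i) duplicates the universals but keeps the same ones in scope of each existential. The only exception is raising, which is already covered by the theorem. So the map on $G$-substitutions is the identity or the raising map. Provability of the instantiated sequents is preserved because each rewrite is an \eqLJomega-provable equivalence that is uniform in the instantiation: $\forall$ distributes over $\wedge$, implication distributes over $\wedge$ in the consequent, $A\imp\true$ is equivalent to $\true$, and $\forall$ commutes with $\imp$ when the variable is not free in the antecedent. Each rewrite is applied inside a positive context built from $\wedge$, $\forall$ and guard implications, and these contexts preserve provability equivalence.

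The one point needing care is that rule (i) duplicates subformulas, so an existential under the duplicated part could in principle be copied. We avoid this by \emph{first} raising every existential all the way to the top. Raising applies to any $\exists x$ whose context is a chain of $\forall$, guards and conjunctions, after we use (iv) and a local (iii) across conjunctions. The order of steps is then:
\begin{enumerate}
\item Raise innermost-out until no $\exists$ lies under $\forall$ or $\imp$, and float them to the top with (iii). The resulting $\bar h$ have order $\le 1$, since the types in $\bar y$ are primitive and the original $\tau$ has order $\le 1$. (A type $\sigma_1\ra\dotsm\ra\tau$ with order-1 $\tau$ and primitive $\sigma_i$ still has order 1.)
\item Apply (i), (ii), (iv) to reach conjunctions of guarded goals. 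Targets are then atoms, $\false$, or equalities of primitive type.
\end{enumerate}
Existentials are never duplicated in step 2.

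Termination follows from a measure: the number of existentials not at top level, then the sum of sizes of the scopes of conjunctions under $\forall$ or $\imp$. The main obstacle is verifying the bijection claim for (i), i.e. that it is genuinely bijective rather than merely surjective. This holds once step 1 has removed all inner existentials, because the substitution domain is then literally unchanged.
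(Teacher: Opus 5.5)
Your route is the paper's own: raising plus a few solution-preserving equivalences. These are floating $\exists$ over $\wedge$, distributing guards and $\forall$ over $\wedge$, and pulling $\forall$ out of guards. Your extra care about raising before distributing, removing $\true$ conjuncts, and termination is a genuine improvement on the paper's one-line argument.

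There is one concrete gap in step 2. You assert that the targets end up being atoms, $\false$, or equalities of primitive type. But neither raising nor any of (i)--(iv) changes the type at which an equation is stated. SLIM goals allow $t =_\sigma s$ for any $\sigma$ not containing $o$. For example, $\exists_{i\ra i} x.\, x = \lambda w.\, w$ is a legitimate closed goal, and your procedure leaves it unchanged, so the output is not a state formula. Such equations also arise routinely when normalizing after backchaining on clauses whose heads take order-1 arguments.

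The missing ingredient is the extensionality rewrite that the paper includes among its equivalences:
\[ t =_{\sigma\ra\tau} s \;\equiv\; \forall y_\sigma.\, t\ y =_\tau s\ y, \]
iterated (with $\beta$-normalization) until the equation is at primitive type.
\begin{itemize}
\item Do it in your step 2. It only adds universals at target positions, with no existentials below them, so the map on $G$-substitutions stays the identity.
\item It is justified in goal position: ${\forall}R$ introduces a fresh eigenvariable $y$, and $t\ y$, $s\ y$ are $\beta\eta$-equal exactly when $t$, $s$ are.
\item The new $\forall y$ is then moved in front of the guards by your rule (iv).
\item Your termination measure needs a component accounting for the types of target equations.
\end{itemize}
With this added, your argument goes through.
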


\begin{proof}
  The state formula $S$ (called the \emph{normalization} of $G$) is easily built
  by a combination of raising and using 
  the following equivalences, where $A \equiv B$ is an abbreviation for
  $(A \imp B) \wedge (B \imp A)$.
  \begin{align*}
    (\exists x. P~x)\wedge Q &\equiv \exists x.(P~x\wedge Q)\\
    t=s \imp (G_1\wedge G_2) &\equiv (t=s\imp G_1)\wedge (t=s\imp G_2)\\
    t=s \imp \forall y.~G    &\equiv \forall y.~t=s \imp  G\\
    (\forall y. P~y)\wedge (\forall y. Q y) &\equiv\forall y.(P~y\wedge Q~y)\\
    t =_{\sigma\ra\tau} s & \equiv \forall y_\sigma.~t~y =_\tau s~y
  \end{align*}
  Then we simply observe that logical equivalences preserve solutions and
  raising keeps the solutions in bijection.
\end{proof}

\begin{definition}[rigid subterms]
  A term $s$ is an \emph{immediate rigid subterm} of $t$, written $t > s$, if
  $t$ is of the form $(f\ t_1\ \dotsm\ s\ \dotsm\ t_n)$ where $f$ is a constant.
  A \emph{rigid path} from $t$ to $s$ is a sequence of terms $t_1, \dotsc, t_n$
  ($n > 1$) such that $t_1$ is $t$, $t_n$ is $s$, and $t_i > t_{i + 1}$ for
  every $i \in \{1, \dotsc, n - 1\}$.
  We say that $s$ is a \emph{rigid subterm} of $t$, written $t >^+ s$, if there is a
  rigid path from $t$ to $s$.
\end{definition}

We will now specify a procedure for simplifying normalized state formulas into
\emph{reduced} state formulas that performs as much equality simplification of
guarding equalities as possible.

\begin{definition}[reduction]
  \label{defn:reduction}%
  Given a normalized state formula $S$, we say $S$ \emph{reduces to} the state
  formula $S'$, written $S \redsto S'$, if:
  \begin{itemize}
  \item $S'$ results from $S$ by rewriting some occurrence of an equality
    according to the following rule, then normalizing: if $f$ is a constant, then
    \begin{multline*}
      \forall \bar y.\,
      \mleft(Q_1 \imp \dotsm \imp(f\ s_1 \dotsm s_k) = (f\ t_1 \dotsm t_k) \imp
      \dotsm \imp Q_m \imp B\mright)
      \quad \redsto
      \\
      \forall \bar y.\,
      \mleft(Q_1 \imp \dotsm \imp s_1 = t_1 \imp \dotsm \imp s_k = t_k \imp
      \dotsm \imp Q_m \imp B\mright)
    \end{multline*}
  \item $S'$ results from $S$ by rewriting a guarded goal as follows, then
    normalizing:
    \begin{multline*}
      \forall \bar y.\,
      \mleft(Q_1 \imp \dotsm \imp Q_i\imp y = t \imp Q_{i+1}\imp 
                      \dotsm \imp Q_m \imp B\mright)
      \quad \redsto
      \\
      \forall \bar y_{- y}.\,
      \mleft(Q_1 \imp \dotsm \imp Q_m \imp B\mright)[t/y]
    \end{multline*}
    where $y \in \bar y$, $y$ is not a rigid subterm of $t$, each of the $Q_i$
    is an equality, and $\bar y_{-y}$ stands for the list of variables $\bar y$
    with $y$ removed.
    We also include the symmetric case with the guard $t = y$ instead of $y = t$.

  \item (occurs check) $S'$ results from $S$ by replacing a guarded goal in $S$ 
of the form
    %
%    \begin{gather*}
$
      \forall \bar y.\,
      \mleft(Q_1 \imp \dotsm \imp y = t \imp \dotsm \imp Q_m \imp B\mright)
$
%    \end{gather*}
    %
    where $y \in \bar y$, and $y$ is a rigid subterm of $t$, with the formula $\true$.
    We also include the symmetric case with the guard $t = y$ instead of $y = t$.
  \end{itemize}
  We write $\redsto*$ for the reflexive-transitive closure of $\redsto$.
  A normalized state formula is said to be \emph{reduced} if there is no $S'$
  such that $S \redsto S'$.
\end{definition}

In the rest of this and the next section, we use $S$ as a schematic variable to
denote normalized state formulas.

\begin{proposition}[reduction preserves solutions]
  \label{thm:reduction-preserves}%
  If $S \redsto* S'$, then the solutions of $\Pref\twoseq{\Pscr}{S}$ are in
  bijection with those of $\Pref\twoseq{\Pscr}{S'}$.
\end{proposition}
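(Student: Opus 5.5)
The argument is by induction on the length of the reduction sequence $S \redsto* S'$, so it suffices to treat a single step $S \redsto S'$. A single step consists of a local rewrite of one guarded goal, followed by normalization. The normalization corollary already shows that normalization preserves solutions up to bijection: normalization uses only logical equivalences and raising, and raising is handled by the raising theorem. The existential prefix $\exists \bar h$ of $S$ is untouched by the local rewrite, so $S$-substitutions and $S'$-substitutions coincide on $\bar h$ (before renormalization). The bijection is then the identity on substitutions for $\bar h$, composed with the bijection coming from normalization. What remains is to show that, for every fixed instantiation $\theta$ of $\bar h$, the rewritten guarded goal $\theta[G]$ is provable in \eqLJomega{} (under $\Pscr$) iff $\theta[G']$ is. Since the state formula is a conjunction, and ${\wedge}R$ is invertible, this reduces to comparing the two guarded goals separately.

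For each of the three rules I would compare the proofs through the ${=}L$ rule. First, by invertibility of ${\forall}R$ and ${\imp}R$, $\theta[G]$ is provable iff the sequent $\bar y\Colon \twoseq{\Pscr, Q_1\theta,\dots,Q_m\theta}{B\theta}$ is provable. Since the terms involved are of primitive type, and each $\theta(h)$ is a closed $\lambda$-term applied to the $\bar y$, the guards become first-order-like equations after $\beta$-normalization.

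In the decomposition case, $f\,\bar s = f\,\bar t$ has exactly the same complete set of unifiers as the set of equations $s_1=t_1,\dots,s_k=t_k$. Applying ${=}L$ to it therefore yields the same premises as applying ${=}L$ successively to the $s_i=t_i$, because of the usual composition property of most general unifiers. Hence either sequent is derivable iff the other is. Cut-elimination for $\FOLDN$ \cite{mcdowell00tcs} lets us assume ${=}L$ can be applied eagerly to the guards, since the rule is invertible.

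In the variable-elimination case, $y$ is an eigenvariable and is not a rigid subterm of $t$. The step I expect to be the main obstacle is this case, specifically the claim that $\{[t\theta/y]\}$ is a most general unifier of $y = t\theta$. The subtlety is that $t$ may contain $y$ under a flexible head $h$: after instantiating $\theta$, the term $t\theta$ may or may not contain $y$, or may contain it only non-rigidly. I would argue that after $\beta$-normalizing $t\theta$, either $y$ does not occur in $t\theta$ (so the singleton substitution is the mgu), or $t\theta$ is $y$ itself (a trivial guard), or $y$ occurs rigidly in $t\theta$. In the last case both sides must be shown to be vacuously true or handled separately. This requires care with how $\theta$ interacts with the substitution $[t/y]$ and $\bar y_{-y}$, and it may be that the intended reading only instantiates when the substitution is well defined after $\theta$. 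In the successful subcase, ${=}L$ has the single premise with $y$ replaced, which is exactly the proof of the rewritten guarded goal with $y$ removed from the eigenvariables.

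In the occurs-check case, $y$ is a rigid subterm of $t$, and this property survives any instantiation $\theta$ of the heads $\bar h$, because rigid paths only go through constants. So $y = t\theta$ has no unifiers, and ${=}L$ closes the sequent with zero premises. The guarded goal is therefore provable, like $\true$. The symmetric variants $t=y$ of the second and third rules follow from the same arguments together with the symmetry of unification.
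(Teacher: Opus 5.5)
Your overall plan agrees with the paper's terse argument: the existential prefix is never touched, so the bijection is the identity on substitutions for $\bar h$, and each step is a first-order unification step on the primitive-type eigenvariables $\bar y$, which you justify through ${=}L$. Your decomposition and occurs-check cases are fine. The gap is the variable-elimination case, which you leave unresolved.

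The missing idea is that this rule only makes sense under the side condition that $y$ does not occur free in $t$ at all. Otherwise $\forall \bar y_{-y}.\,(Q_1\imp\dotsm\imp Q_m\imp B)[t/y]$ still has $y$ free and is not a closed state formula, so it has no solutions to put in bijection with anything. The paper's own example confirms this reading: it declares $\exists h.\,((\forall y.\, y=(f\ (h\ y))\imp\false)\wedge (h\ a = a))$ reduced. A guard in which $y$ occurs in $t$ below a flexible head is therefore simply not a redex. This is also why your case split on $t\theta$ cannot be completed. In the branches ``$t\theta$ is $y$'' and ``$y$ occurs properly in $t\theta$'', the right-hand side is ill-formed. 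Moreover, the fact that which branch you are in depends on $\theta$ is exactly why such guards must be left in reduced formulas rather than eliminated.

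Once the side condition is explicit, the case is short. The existentials of a state formula are outermost, so any solution $\theta$ maps them to closed terms, and hence $y$ does not occur in $t\theta$. Since $y$ has primitive type, $[t\theta/y]$ is a most general unifier of $y = t\theta$. So ${=}L$ applied to this guard has exactly one premise, with signature $\bar y_{-y}$. Write $C$ for $Q_1\imp\dotsm\imp Q_m\imp B$. Because $(C[t/y])\theta = (C\theta)[t\theta/y]$, that premise is precisely the sequent obtained from $\theta[G']$ by inverting ${\forall}R$ and ${\imp}R$. The rule ${=}L$ gives one direction of the equivalence, and its invertibility (via cut-elimination) gives the other. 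The symmetric guard $t = y$ is handled identically.
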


\begin{proof}
  No existential quantifiers are touched during reduction and the processing steps
  are essentially solving only first-order unification problems.
\end{proof}

\begin{example}
  The state formula in Example~\ref{ex:one}, namely, 
  \begin{gather*}
    \exists h.\,\forall u, v.\,
    \mleft({
      (u=a\imp v=b\imp(h\ u\ v)=a) \wedge
      (u=b\imp(h\ u\ v)=u)
    }\mright)
  \end{gather*}
  is not reduced.
  It can be reduced to
  ${
    \exists h.\, \forall u, v.\,
    \mleft({
      (h\ a\ b)=a \wedge (h\ b\ v)=b
    }\mright)
  }$.
  These two formulas have the same solutions.
  On the other hand, the following formula
  \begin{gather*}
    \exists h.\,
    \mleft({
      (\forall y.\,
      y=(f\ (h\ y)) \imp\false)
      \wedge (h\ a = a)
    }\mright)
  \end{gather*}
  is reduced since $y$ is not a rigid subterm of $(f\ (h\ y))$.
  Note that in isolation, the equation $(h\ a = a)$ has two solutions, namely,
  where $h$ is instantiated with $\lambda w.\, w$ or $\lambda w.\, a$, but only
  the first of these will solve the first guarded goal.
\end{example}

\begin{example}\label{ex:interpolation}
  An \emph{interpolation problem}~\cite{dowek94apal} is a set of
  equations of the form $$(x\ t_1\ \dotsm\ t_n) = b,$$ where
  such that $x$ is an instantiable variable and $t_1, \dotsc, t_n$ and
  $b$ are ground terms.
  If all the instantiable variables in an interpolation problem have types with
  order less than or equal to $n$, we say that this problem is of order $n+1$.
  There is the following immediate correspondence between interpolation problems
  containing one equation (and, hence, one instantiable variable) and goal
  formulas.
  \begin{align}
    \exists x.~(x\ t_1\ \dotsm\ t_n) =&\ b
    \label{eqn:one} \\
    \exists x.\, \forall y_1, \dotsc, y_n.\,
    \bigl(
    y_1 = t_1\imp\dotsm\imp y_n = t_n\imp (x_1\ y_1\ \dotsm\ y_n)
    =&\ b \bigr)
    \label{eqn:two} \\
    \forall y_1, \dotsc, y_n.\, \exists u.\,
    \bigl(
    y_1 = t_1 \imp\dotsm\imp y_n = t_n\imp u
    = &\ b \bigr)
    \label{eqn:three}
  \end{align}
  Formula (\ref{eqn:two}) arises from (\ref{eqn:one}) by applying the
  equivalence $(\forall v.\, v=t \imp B)\equiv \subst{t}{u}{B}$ (assuming that
  $v$ is not free in $t$).
  Formula (\ref{eqn:two}) is related to (\ref{eqn:three}) by raising.
  If (\ref{eqn:one}) is a second-order interpolation problem, then
  (\ref{eqn:three}) contains only first-order quantification.
  Thus, proofs of (\ref{eqn:one}) (\ie, substitutions for $x$) are in one-to-one
  correspondence with proof of (\ref{eqn:three}).
  Of these three formulas, only (\ref{eqn:one}) and (\ref{eqn:two}) are state
  formulas and only (\ref{eqn:one}) is in reduced form.
\end{example}

\section{Reasoning about targets in reduced state formulas}
\label{sec:transition}

State formulas can be reduced (Definition~\ref{defn:reduction}) eagerly since,
by Proposition~\ref{thm:reduction-preserves}, all solutions are preserved.
To solve a state formula, we therefore only need to consider reduced state
formulas.
Reduction also fully handles all reasoning on the guarding equalities, so we
only need to consider the targets of guarded goals.
There are two kinds of reasoning steps that are allowed for such goals.
\begin{itemize}
\item \emph{Backchaining} steps, which make use of a definite clause in the
  program $\Pscr$ to ``unfold'' a guarded atom into the corresponding body of
  the clause.
\item \emph{Unification} steps, which reason about a guarded equality.
\end{itemize}

We will write such steps in the form of a labeled transition system on state
formulas, $\twoseq{\Pscr}{S \transto[\rho] S'}$ where $\Pscr$ is a program, $S$
and $S'$ are reduced state formulas, and $\rho$ is an $S$-substitution.
Generally, $\rho$ is neither a solution for $S$ nor for $S'$: instead, composing
$\rho$ with a solution for $S'$ yields a solution for $S$.

\paragraph{Backchaining steps:}

Pick a goal in $S$ with a guarded atom, i.e., a goal formula of the following
form:
$\forall \bar y.\, \mleft(Q_1 \imp \dotsm \imp Q_m \imp p\ t_1\ \dotsm\
t_n\mright)$ where $p$ is a predicate symbol, the $Q_i$ are the guarding
equalities, and the $t_j$ are terms.
Find a definite clause in the program $\Pscr$ of the form
$\forall \bar u.\, G \imp p\ s_1\ \dotsm\ s_n$.
Let $S'$ be the result of replacing the guarded atom $p\ t_1\ \dotsm\ t_n$ in
$S$ with the following goal, then normalizing:
${
  \exists \bar u.\, \mleft({
    (t_1 = s_1) \wedge \dotsm \wedge (t_n = s_n) \wedge G
  }\mright)
}$.
Then the backchaining transition is $\twoseq{\Pscr}{S \transto[\epsilon] S'}$
(where $\epsilon$ is the empty substitution).\footnotemark

\footnotetext{%
  A backchaining step for the logic program of Figure~\ref{fig:eqlj-mod} is
  shown in
  Appendix~\ref{apx:backchaining-example}.
}

\paragraph{Unification steps:}

Let $t = s$ be a guarded equality in the reduced state formula $S$.
Reasoning about such terms uses a slight variant of Huet's pre-unification
procedure~\cite{huet75tcs}.
In the simplest case, if $t$ and $s$ are syntactically identical, then $S'$ is
just $S$ with the goal formula containing the guarded equality replaced with
$\true$, and the transition is $\twoseq{\Pscr}{S \transto[\epsilon] S'}$.

Following Huet, say that a term is \emph{rigid} if it is of the form
$(f\ t_1\ \dotsm\ t_n)$ where $f$ is a constant (called its \emph{head}).
If both $t$ and $s$ are \emph{rigid} terms with the same head, i.e., of the form
$f\ t_1\ \dotsm\ t_n$ and $f\ s_1\ \dotsm\ s_n$ respectively (for some constant
$f$), then we obtain $S'$ from $S$ by replacing the guarded equality $t = s$
with the following and normalizing:
$(t_1 = s_1) \wedge \dotsm \wedge (t_n = s_n)$ (or just $\true$ if $n = 0$).
The transition is $\twoseq{\Pscr}{S \transto[\epsilon] S'}$.

If both $t$ and $s$ are rigid terms with different heads, that particular
guarded equality is false, so $S'$ is produced from $S$ by replacing $t = s$
with $\false$, with the transition $\twoseq{\Pscr}{S \transto[\epsilon] S'}$.
Note that this does not mean that the $S$ itself is unsolvable.
For example, $\exists x.\, a = x \imp (f\ a) = (g\ b)$ has a solution
(instantiate $x$ with $b$) even if $f$ and $g$ are different constants.
We can thus replace this guarded goal with $\exists x.\, a = x \imp \false$ but
not with just $\false$.

The only other possible step is if one of $t$ or $s$ is a rigid term of the form
$f\ t_1\ \dotsm\ t_n$ (for some constant $f$), while the other is a
\emph{flexible} term of the form $x\ s_1\ \dotsm\ s_m$ where $x$ is an
existentially quantified variable of type
$\sigma_1 \ra \dotsm \ra \sigma_m \ra \tau$ in $S$, with the $\sigma_i$ and
$\tau$ being primitive types.
Thus, $S$ has the form $\Qscr_1.\, \exists x.\, \Qscr_2. C$, where $\Qscr_1$ and
$\Qscr_2$ are sequences of existential quantifiers and $C$ is the underlying
conjunction of guarded goals.
There are two possible kinds of transitions here.

\begin{itemize}
\item \emph{Imitation:} the flexible term is made to imitate the rigid term
  by setting $\rho$ to be the substitution (where each of the $x_i$ occur
  nowhere in $S$):
  \begin{gather*}
    x\mapsto \lambda w_1, \dotsc, w_m.\, f\, (x_1\ w_1\ \dotsm\
    w_m)\ \dotsm\ (x_n\ w_1\ \dotsm\ w_m).
  \end{gather*}
  The transition is
%  $\twoseq{\Pscr}{S \transto[\rho] \exists x_1, \dotsc, x_n.\, \rho[S]}$
  $\twoseq{\Pscr}{S \transto[\rho] \Qscr_1.\ \exists x_1, \dotsc, x_n.\, \Qscr_2.\,
    (C\rho)}$ 
\item \emph{Projection:} here, we set $\rho$ to be
  $x \mapsto \lambda w_1, \dotsc, w_m.\, w_i$ for some $i \in \{1, \dotsc, m\}$
  such that $\sigma_i = \tau$.
  %
%  The transition is $\twoseq{\Pscr}{S \transto[\rho] \rho[S]}$.
  The transition is $\twoseq{\Pscr}{S \transto[\rho] \Qscr_1\Qscr_2.(C\rho)}$.
\end{itemize}

\begin{theorem}[correctness of state transitions] \mbox{}
  \begin{itemize}
  \item (Soundness) if $\twoseq{\Pscr}{S \transto[\rho] S'}$ and $\theta$ is a
    solution for $\twoseq{\Pscr}{S'}$, then $\rho \circ \theta$ is a solution
    for $\twoseq{\Pscr}{S}$.
  \item (Relative completeness) if $\twoseq{\Pscr}{S}$ has a solution $\theta$,
    then there exists $\rho$ and $S'$ with $\twoseq{\Pscr}{S \transto[\rho] S'}$
    such that there is a solution $\theta'$ for
    $\twoseq{\Pscr}{S'}$ such that $\theta = \rho \circ \theta'$.
  \end{itemize}
\end{theorem}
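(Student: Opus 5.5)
Soundness and relative completeness are handled separately, each by case analysis on the kind of transition: backchaining, trivial equality, rigid--rigid with the same head, rigid--rigid with different heads, imitation, and projection. Normalization preserves solutions (Corollary on normalization), and so does reduction (Proposition~\ref{thm:reduction-preserves}). We may therefore compare $S$ with the unnormalized result of the rewrite and transfer solutions along these bijections. Throughout, we use that a solution $\theta$ of a reduced state formula $\exists\bar h.\,(G_1\wedge\dotsm\wedge G_n)$ makes each closed guarded goal $\theta[G_i]$ provable in \eqLJomega.

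For soundness, fix a transition and a solution $\theta$ for $S'$. The steps labelled $\epsilon$ only replace a subformula $H$ of some guarded goal by $H'$. It then suffices to show that $H'\imp H$ is provable under the guarding context, uniformly in the instantiation, and to add that the new existentials of $H'$ are simply discarded when we restrict to $S$'s existentials.
\begin{itemize}
\item For backchaining, $H=p\,\bar t$ and $H'=\exists\bar u.(\bar t=\bar s\wedge G)$. We use ${=}L$ on the equations, which has one premise with the most general unifier, or we simply rewrite, and then backchain on the program clause using ${\forall}L$ and ${\imp}L$.
\item For the trivial and same-head cases we use ${=}R$ and congruence.
\item For different heads, $H'=\false$ and $\false\imp H$ is immediate.
\end{itemize}
For imitation and projection, $S'$ is literally $(C\rho)$ under a quantifier prefix, so a solution $\theta$ of $S'$ composed with $\rho$ instantiates $S$'s variables exactly as required. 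Here we check that the $G$-substitution scoping condition is preserved, which holds because $\rho$ introduces only closed $\lambda$-terms over fresh existentials.

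For relative completeness, fix a solution $\theta$ for $S$. If $S$ has no guarded goals, or only $\true$ goals, the statement is vacuous or trivial (we allow the $\true$-removal step). Otherwise we choose a guarded goal and case on its target $B$ after instantiation. Since $\theta[S]$ is provable and $S$ is reduced, the guards of this goal, after $\theta$, have been eliminated as far as first-order reduction allows. So consider a cut-free proof of $\forall\bar y.\,\bar Q\imp\theta[B]$; by inversion ${\forall}R$ and ${\imp}R$ come first. If $B$ is an atom, the proof continues with ${=}L$ steps and eventually a backchain on some clause $\forall\bar u.\,G\imp p\,\bar s$. We pick that clause, and the instance of $\bar u$ together with the proof of $G$ gives the extension $\theta'$ of $\theta$, with $\rho=\epsilon$. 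If $B$ is $t=s$ and both sides are rigid, we take the corresponding rigid step with $\theta'=\theta$; with different heads, $\theta[\cdot]$ must prove the goal via contradictory guards, so $\bar Q\imp\false$ is also provable. If one side is flexible with head $x$, we look at the head of the normal form of $\theta(x)=\lambda\bar w.\,r$. If $r$'s head is a constant $f$, it must be the rigid side's head, or else the goal is again proved by the guards; we take imitation and set $\theta'(x_i)=\lambda\bar w.\,r_i$. If $r$'s head is some $w_i$, we take projection. Note that $r$ cannot have an eigenvariable head, since $\theta(x)$ is closed under the prefix after raising. In every case $\theta=\rho\circ\theta'$ by construction.

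The main obstacle is the flex-rigid and different-head cases under nonempty guards. There, $\theta[B]$ may be false while the guarded goal still holds because the instantiated guards are not unifiable. For such a goal we might have to choose a different step, or even pick a different guarded goal. I would handle this by first checking whether, under $\theta$, some guard becomes non-unifiable. If so, the different-head, imitation and projection choices are all still sound, and we take the choice dictated by the normal form of $\theta(x)$, which preserves $\theta$ irrespective of provability of the target. In general I would establish completeness by choosing the transition according to the structure of $\theta$ rather than of the proof, and use the proof only for the atom case.
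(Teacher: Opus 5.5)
Your soundness half is fine. Your completeness strategy is the Huet/MATCH-style argument the paper's sketch points to: case analysis on the selected target, with imitation or projection read off from the head of the normal form of $\theta(x)$. The gap is the case you flag at the end, and the resolution you propose is not available in the system. When the instantiated guards of the selected goal are not unifiable, the head of $\theta(x)$ may be a constant $g$ different from the rigid head $f$. Then there is no ``choice dictated by the normal form of $\theta(x)$'', because the only imitation transition copies $f$. Your remark that after $\theta$ the guards are already reduced as far as possible is also false. Instantiating a flexible guard such as $x\,y=a$ can produce $y=a$ or an outright clash, and that is where the trouble comes from.

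Switching to another goal does not help. With $a,b:i$, $f:i\to i$ and $x:i\to i$, take
\[
  S=\exists x.\,\bigl((\forall y.\,x\,y=a\imp x\,y=b)\wedge(\forall y.\,x\,y=b\imp\false)\bigr).
\]
\begin{itemize}
\item $S$ is reduced, since both guards are flexible.
\item Its solutions are exactly $x\mapsto\lambda w.\,f\,r$. For these, both guards clash and ${=}L$ closes each goal with no premises. By contrast, $\lambda w.\,a$, $\lambda w.\,b$ and $\lambda w.\,w$ all fail.
\item The only transitions from $S$ are imitation $x\mapsto\lambda w.\,b$ and projection $x\mapsto\lambda w.\,w$, both on the first target.
\item Both lead to unsolvable states: the second goal becomes $b=b\imp\false$ or $y=b\imp\false$, respectively.
\end{itemize}
So, with the transitions as defined, relative completeness fails even in the weak form ``some solution survives a step'', and no argument can close your gap. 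The paper's appeal to Huet and to Miller's Theorem~8.2 does not cover this case either. In those settings every flex-rigid pair must actually be unified by $\theta$, whereas here $\theta$ may instead refute a flexible guard.

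To get a true statement you need one of two changes.
\begin{itemize}
\item \textbf{Restrict the claim.} For example, restrict it to a selected non-suspended goal whose guards remain unifiable under $\theta$. With $\sigma$ the mgu of the instantiated guards, $(x\,\bar s)\theta\sigma$ and $(f\,\bar t)\theta\sigma$ must then coincide. This forces the head of $\theta(x)$ to be $f$ or a projection, and your case analysis goes through.
\item \textbf{Add transitions.} Add steps on flexible heads occurring in guards that imitate an arbitrary constant of the signature, as well as project. Then every possible head of $\theta(x)$ has a matching branch.
\end{itemize}

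Two smaller points. First, the clause as stated also fails for solvable suspended states, such as $\exists\bar h.\,\true$, flex-flex targets, or guarded $\false$. Those cases are neither vacuous nor trivial, and they need a non-suspension hypothesis rather than an invented ``$\true$-removal'' transition. Second, in the atom case with clashing guards the cut-free proof contains no backchain at all. There you must pick some clause for $p$ and extend $\theta$ arbitrarily, rather than read $\theta'$ off the proof.
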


\begin{proof}[Sketch]
  The proof follows the general outline of the proof of Theorem 8.2
  in~\cite{miller91jlc} and the correctness of the MATCH tree in Section 4
  of~\cite{huet75tcs}.
\end{proof}

Let $\transto*$ be the reflexive-transitive closure of $\transto$; i.e., we set
$\twoseq{\Pscr}{S \transto*[\epsilon] S}$ and we write
$\twoseq{\Pscr}{S \transto*[\rho] S'}$, if there is an interleaved sequence of
state formulas and substitutions $S_0, \rho_1, S_1, \rho_1, \dotsc, \rho_n, S_n$
with $S_0$ being $S$, $S_n$ being $S'$,
$\rho = \rho_1 \circ \dotsm \circ \rho_n$, and
$\twoseq{\Pscr}{S_{i - 1} \transto[\rho_i] S_i}$ for each
$i \in \{1, \dotsc, n\}$.
We say that a state formula $S$ is \emph{suspended} if there is no $\rho$ and
$S'$ such that $\twoseq{\Pscr}{S \transto[\rho] S'}$.

Given these rules for state transitions, our non-deterministic proof search
algorithm to prove the goal formula $G$ from a program $\Pscr$ involves first
converting $G$ to a reduced state formula $S$ and then searching for a state
formula $S'$ such that $\twoseq{\Pscr}{S\transto*[\rho] S'}$ and $S'$ is a
suspended state formula.  While these suspended state formulas may be complex
and difficult to determine their complete sets of solution, in practice, we
expect that most suspended state formulas will have simple structure.  This
situation is rather similar to Huet's pre-unification procedure which reduces
higher-order unification problems to sets of flexible-flexible disagreement
pairs: determining whether or not such sets of pairs have \emph{closed}
solutions is theoretically undecidable \cite[Section 8]{miller92jsc} but, in
practice, they are generally simple and seldom a concern.

The transition system outlined here is only a baseline and many optimizations
are possible.
For example, the occurs-check is not part of the transition system: if, say,
$x = f\ x$ is a target of a guarded equation (for an existentially quantified
$x$), then the imitation produces an infinite path.
Employing a notion of rigid-path checking for existentially quantified formulas
is worth doing.
The pattern fragment~\cite{miller91jlc} and the FCU subset of higher-order
unification~\cite{libal22amai} might be applicable to suspended states.
There can be other odd situations too: for example, a suspended guarded equation
of the form $t_1=t_2 \imp t_2=t_3\imp t_1=t_3$ is clearly solvable.

\section{Implementing an interactive search tactic in Abella}
\label{sec:abella}

As mentioned earlier, one of the goals of this work was to improve the
automation in the Abella theorem prover~\cite{baelde14jfr}.
While Abella has had a \lsti|search| tactic that does some basic search steps on
\emph{goals}, it is severely limited in its treatment of the full logic and of
rules such as ${=}L$ in general.

\subsection{Encoding the search problem}
\label{sec:abella.encoding}

Encoding the formulas of Abella in SLIM is generally straightforward along the
lines of the encoding of \eqLJone shown in Figure~\ref{fig:eqlj-mod}.
The only kind of formula that is not directly representable in SLIM is the
$\nabla$-quantifier~\cite{miller05tocl}, which we leave for future work.
The search problem is then a simple matter of packaging the proof context
consisting of the lemmas in scope and the local hypotheses in the form of a SLIM
state formula using a \lsti|seq| predicate.

\paragraph{Definitions yield two implications:}

Abella supports inductive and co-inductive definitions of atomic predicates,
which can be abstractly seen as a definition of a \emph{head} of the form
$p\ x_1\ \dotsm\ x_n$ in terms of a body formula $B$ whose free variables are
contained in $\{x_1, \dotsc, x_n\}$, with both head and body assumed to be
universally closed over these variables.
Such a definition can be translated into a pair of quantified implications,
namely:
%
% \begin{gather*}
${
  \forall \bar x.\, \mleft(p\ \bar x \imp B\mright)
}$
and
${
  \forall \bar x.\, \mleft(B \imp p\ \bar x\mright).
}$
%\end{gather*}

Both of these implications can be included (after encoding) among the collection
of assumptions as an argument to the \lsti|seq| predicate.
When encoded this way, the feature of being the least fixed point (for inductive
definitions) and the greatest fixed point (for coinductive definitions) is
lost.
Thus, this encoding is sound but not complete.

\paragraph{Inductive and co-inductive sizes:}

Abella implements induction and co-induction by a kind of guarded circular
reasoning, where the theorem statement is allowed to be used as a lemma in the
proof of an inductive or co-inductive theorem as long the inductive or
co-inductive argument can be shown to be strictly smaller or larger
(respectively).
This is achieved syntactically in Abella by means of annotations on atomic
formulas.
For example, for an inductively defined predicate $p$, an Abella theorem works
with two annotated versions $p\samesize$ (same size) and $p\smallersize$
(strictly smaller), with $p\samesize$ defined in terms of $p\smallersize$.
This can be achieved with a modification of the above pair of implications to
track the annotations.
Co-induction is treated similarly with a different pair of annotations.

\subsection{Bounding proof search}
\label{sec:abella.control}

Since both normalization and reduction processing are non-deterministic and
terminating, there are only two ways in which the transition system of
Section~\ref{sec:transition} can potentially loop.
First, the use of backchaining steps can be iterated indefinitely for
recursively specified clauses such as \lsti|seq| and \lsti|mnr| of
Figure~\ref{fig:eqlj-mod}.
To gain some control over such steps, one can create an annotated version of
these predicates with an extra size bound argument.
A much more promising alternative is to use a predicate such as
\lsti|seq Gamma G Xi| where the \lsti|Xi| argument is a \emph{foundational proof
  certificate (FPC)}~\cite{chihani17jar}.
An FPC can serve as a proof outline (with existentially quantified holes) for
which the logic programming engine must attempt to elaborate into a complete
proof term.
In this way, an FPC can be seen as a means to greatly control proof search. The
papers~\cite{blanco17cade,chihani15tableaux,miller24tplp} report on applications
of FPCs to support both proof search and proof reconstruction.

It is also important to bound the unification steps of the transition system,
because in the general case imitation steps can be non-terminating.
This is particularly the case since the unification steps do not implement an
occurs-check on guarded equalities, and thus a state formula such as
$\exists x.\, x = f\ x$ would explore infinitely deep terms.
However, there are other sources of non-termination, so a practical
implementation would require external bounds on the number of unification steps
before backtracking.

\section{Future and related work}

Equality is much more commonly considered within a \emph{classical logic}
setting, most notably through the theory of paramodulation~\cite{robinson83sc}.
Similarly, Comon and Lescanne~\cite{comon89jsc} explored \emph{disunification}
in a classical context, focusing on ``systems'' composed of conjunctions of
equations and \emph{disequations} $t \neq s$.
In our framework, such disequations are equivalent to $(t=s) \supset \false$;
however, our notion of states is fundamentally different since it is based on
intuitionistic logic.

The technique of modeling unification as a state transition system with
suspensions is well-established, appearing in an early paper of
Huet~\cite{huet73ijcai} and within the \emph{constraint logic programming} (CLP)
paradigm~\cite{jaffar87popl,tassi25coqpl}.
It is also standard in type and term inference systems, e.g.,~\cite{pientka13jfp}.

Future research will further investigate the parallels between the
current framework and CLP, potentially broadening the definition of a
state to encompass more general constraints.
While the reduced state formulas introduced here are essential for the
general problem, specific sub-cases may permit greater automation via
techniques such as \emph{constraint handling rules
(CHR)}~\cite{fruhwirth98jlp}.
Implementing the non-deterministic interpreter presented here---for
instance, as an extension of Abella---will necessitate the design of
explicit search strategies.
Furthermore, the interpreter could be enhanced with additional
proof-theoretic features, such as the
$\nabla$-quantifier~\cite{miller05tocl}, to handle generic judgments
more natively.

\section{Conclusion}
\label{sec:conclusion}

This paper addresses the challenge of integrating syntactic equality, treated as
a logical connective, into automated proof search.  While traditional frameworks
like \lP and LF lack direct support for this equality, we present a procedure
that handles the complexities of quantifier alternation and equations in both
positive and negative occurrences.  To facilitate automation, we introduce SLIM
(Specification Logic based on an Intuitionistic Metalogic), a lightweight
framework that enables the specification of diverse object-level sequent calculi
that may now involve syntactic equality.

The core of the proposed search procedure relies on transforming goal formulas
into state formulas through raising and normalization.  These state formulas are
then processed via a non-deterministic transition system involving backchaining
and a generalization of Huet's pre-unification procedure that deals with rigid
and flexible terms and various kinds of suspended equalities. Ultimately, this
system enables unification-aware proof search across various first-order
calculi, providing a robust foundation for future integration into theorem
provers like Abella.

\bibliographystyle{splncs04}
\bibliography{extract,local}

\appendix

\section{Provability from the empty logic program}
\label{apx:empty-lp}

Consider the simple case of the empty logic program and the two classes of
(first-order), atom-free formulas defined by the following recursive definition
(taken from \cite{miller22amai}):
\begin{align*}
  \Phi ::=&\
  \Phi_1 \wedge \Phi_2 \sep
  \exists_\sigma x.\, \Phi \sep
  \forall_\sigma x.\, \Phi \sep
  \Psi\\
  \Psi ::=&\
  t_1 =_{\sigma_1} s_1 \imp \dotsm \imp t_n =_{\sigma_n} s_n \supset t_0 =_{\sigma_0} s_0
\end{align*}
with the further assumption that every type $\sigma$ or $\sigma_i$ is a
primitive type.
The formulas in the syntactic class $\Phi$ are $G$ formulas, and they contain no
occurrences of atomic formulas.
The formulas in Examples~\ref{ex:equiv=} and~\ref{ex:succ} are examples of
$\Phi$-formulas.
We have the following propositions concerning the provability of such formulas
with respect to the empty logic program.

\begin{proposition}
  The provability of $\Pref[\cdot]\twoseq{\cdot}{\Phi}$ is undecidable, although
  if $\Phi$ contains no existential quantifiers, then provability is decidable.
\end{proposition}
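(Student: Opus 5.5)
The plan has two independent halves: undecidability by a reduction from Hilbert's tenth problem, and decidability for the existential-free fragment by a terminating rewriting argument that reuses the reduction machinery of Definition~\ref{defn:reduction}.

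\emph{Undecidability.} I will encode diophantine equations over $\nat$, built from $z$ and $s$ as in Example~\ref{ex:succ}. The difficulty is that $\Phi$ contains no atoms and only first-order equalities, so addition and multiplication cannot be written directly as predicates. The trick, following \cite{miller22amai}, is to represent an arithmetic relation by the \emph{graph} of a witness function carried by a universally-then-existentially quantified structure.

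For addition I introduce a primitive type of triples with a constructor $\mathit{tr}:\nat\ra\nat\ra\nat\ra\mathit{trip}$. I then existentially quantify a finite ``table'' of triples (or, after raising, a first-order function variable). I impose guarded equations of the form
\[
\forall u\,\forall v.\, (\text{table entry} = \mathit{tr}\ u\ v\ w \imp \dots)
\]
which force the table to be closed under the recursive clauses $x+z=x$ and $x+s\,y = s(x+y)$, restricted to the finitely many values needed. Multiplication is handled the same way. The guards $t=s\imp t'=s'$ are intuitionistic, and $=L$ uses most general unifiers, so a guarded equation $\forall \bar u.\,(p=q\imp r=r')$ holds exactly when $r\sigma$ and $r'\sigma$ coincide for the mgu $\sigma$ of $p,q$. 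This gives the ``functional dependency'' needed to make each table entry correct.

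Next I will check two directions. First, given a solution of the polynomial equation, the obvious finite tables and numeral witnesses yield a proof. Second, any proof of $\cdot\Colon\twoseq{\cdot}{\Phi}$ yields closed witnesses for the $\exists$'s, by inspecting a cut-free proof: the $\exists R$ rule is applied only with closed terms, since $\Sigma$ is empty at top level. These witnesses satisfy the guarded constraints, and by an induction on numerals they compute correct sums and products. This second direction is the main obstacle. Witness terms may be arbitrary closed terms of type $\nat$ or $\mathit{trip}$, and I must exclude spurious solutions in which guards become vacuous because they are non-unifiable. The fix is to phrase each constraint so that its guard is always satisfiable by the intended instance; a non-unifiable guard then would not certify anything useful, and the remaining unguarded equations force the numeral shape.

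\emph{Decidability without $\exists$.} With no existential quantifiers, $\Phi$ normalizes, using the equivalences in the proof of the normalization corollary, into a conjunction of guarded equalities $\forall\bar y.\,Q_1\imp\dots\imp Q_m\imp t=s$ with no existential variables. It then suffices to decide each conjunct. All terms are first-order, since every type is primitive and the quantified variables are eigenvariables. Applying the reduction steps of Definition~\ref{defn:reduction} is first-order unification of the guards, which terminates. It produces either $\true$, when the guards are non-unifiable (via clash, now including distinct heads, or occurs check), or the formula $\forall\bar y'.\,t\sigma=s\sigma$ with $\sigma$ the mgu.

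By invertibility of $\forall R$, $\imp R$ and $=L$ (a single premise given by the mgu), that last formula is provable iff $t\sigma$ and $s\sigma$ are syntactically identical, since ${=}R$ is the only right rule available. Proposition~\ref{thm:reduction-preserves} justifies that these steps preserve provability. Hence provability reduces to terminating first-order unification followed by a syntactic identity check, which is decidable.
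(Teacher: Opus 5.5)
Your decidability half is correct and is essentially the paper's argument: with no existentials, $\forall R$, $\wedge R$ and $\imp R$ are invertible, and every ${=}L$ is first-order unification on eigenvariables, so it has zero premises or a single mgu premise. What remains is a sequent with empty left context, provable only by ${=}R$. Routing this through normalization and Definition~\ref{defn:reduction} is just a repackaging. You are right that a clash rule for distinct heads must be added, since that definition has none.

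The undecidability half has a genuine gap. The paper does not build the reduction; it simply invokes \cite{miller22amai}. Your reconstruction would not work as stated, for three reasons.

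\emph{Table size.} $\Phi$ must be computed from the polynomial alone. So if the table is given by finitely many existentials, its size is fixed in advance, and ``restricted to the finitely many values needed'' presupposes a solution. Closure under $x+s\,y=s(x+y)$ forces the table to contain every smaller instance. A fixed-size table can therefore certify only sums and products with bounded arguments, which gives at best a bounded, decidable search rather than Hilbert's tenth problem. Packing the table into one list-typed existential does not help by itself: constraining every entry of an unbounded list needs recursion, which is unavailable without atoms.

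\emph{Raised alternative.} The witness of a raised variable is a fixed term $\lambda\bar u.\,t$ over the universals and constants. No such context computes $+$ or $\times$ on numerals: applied to $s^n\,z$ and $s^m\,z$, it can only produce $s^{n+k}z$, $s^{m+k}z$ or $s^k z$ in a $\nat$ position. So ``the graph of a witness function'' cannot be the arithmetic relation itself. A working encoding must store numbers in the \emph{shape} of the witnesses and pin down an unbounded computation with finitely many guarded constraints. Goldfarb's reduction for second-order unification is one model, and raised state formulas are closely related to such problems. That is precisely what your sketch omits.

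\emph{Converse direction.} You acknowledge that prover-chosen witnesses can make guards non-unifiable and hence vacuous. But ``phrase each constraint so that its guard is always satisfiable by the intended instance'' does nothing to constrain unintended witnesses. Also, only the outermost existentials receive closed witnesses; those under a $\forall$ may contain eigenvariables.

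So either cite \cite{miller22amai} for this half, as the paper does, or give an encoding and actually verify both directions.
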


\begin{proof}
  A reduction to Hilbert’s Tenth Problem (regarding finding solutions to
  Diophantine equations) was used in~\cite{miller22amai} to show that, in
  general, provability of the sequent $\Pref[\cdot]\twoseq{\cdot}{\Phi}$ is
  undecidable.
  Assume, however, that $\Phi$ contains no existential quantifiers.
  Note that in any \eqLJomega proof of $\Pref[\cdot]\twoseq{\cdot}{\Phi}$, any
  occurrence of the ${=}L$ rule involves only first-order unification (given
  that eigenvariables are always of primitive type).
  The result follows then from the fact that the right-introduction rules for
  $\forall$, $\wedge$, and $\imp$ and the left-introduction rule for equality are
  all invertible.
\end{proof}

\section{Example of backchaining in SLIM}
\label{apx:backchaining-example}

In the logic program of Figure~\ref{fig:eqlj-mod}, let $S$ be the state formula
\begin{gather*}
  \exists x.\,
  \forall y.\,
  (y = x~y) \imp \lsti|seq|\ (p\ y \lsti|::| \lsti|nil|)\ (\lsti|exists|~\lambda w.p~w)
\end{gather*}
Note that this state formula is reduced, and that it has two kinds of solutions,
namely, the one that instantiates $x$ with $\lambda u.\, u$ and the other that
instantiates $x$ with $\lambda u.\, t$ for any closed term $t$.
The backchaining state transition can be applied with any of the 17 clauses in
Figure~\ref{fig:eqlj-mod}.
Here, we pick the one that encodes the introduction of existential
quantification on the right: this clause can be written also as
\begin{gather*}
  \forall \Gamma.\, \forall C. (\exists x.\, \lsti|seq|\ \Gamma\ (C\ x))
  \imp (\lsti|seq|\ \Gamma\ (\lsti|exists|\ C)).
\end{gather*}
The result of backchaining with this goal first yields the goal formula 
\begin{align*}
  \exists x.\,
  \forall y.\,
  (y = x~y) \imp
  \exists \Gamma.\, \exists C.\,\bigl(&
  (p\ y \lsti|::| \lsti|nil|) = \Gamma) \wedge {}\\
  & (\lsti|exists|\ (\lambda w.\, p\ w)) = (\lsti|exists|\ C) \wedge {}\\
  & (\exists x.\, \lsti|seq|\ \Gamma\ (C\ x))\bigr).
\end{align*}
The result of reducing that goal formula yields the reduced state formula
\begin{align*}
  \exists x.\,
  \exists \Gamma'.\,
  \exists C'.\,
  \exists x'.\,
  \forall y.\,\Bigl(&
  \bigl(\forall y.\, y = x\ y \imp p\ y \lsti|::| \lsti|nil| = \Gamma'\ y \bigr) \wedge {}\\
  & \bigl(\forall y.\, y = x\ y \imp \lsti|exists|\ (\lambda w.\, p\ w) = \lsti|exists|\ (C'\ y)\bigr) \wedge {}\\
  & \bigl(\forall y.\, y = x\ y \imp \lsti|seq|\ \Gamma\ (C'\ (x'\ y)) \Bigr).
\end{align*}
Here, the quantified variables $\Gamma'$, $C'$, and $x'$ are raised versions of
the corresponding non-primed variable.  This state formula has two guarded
equalities and one guarded atom.

\section{One-sided MALL as an object logic}
\label{apx:linear}

As an illustration of the expressivity of SLIM, we give here a logic program
that specifies \emph{classical} first-order multiplicative-additive linear logic
with equality (MALL) in a one-sided formulation.

\begin{lstlisting}[language=lprolog,basicstyle={\ttfamily}]
kind tm, atm          type.             % Terms and atoms
kind fm               type.             % Formulas
type patom, natom     atm -> fm.        % Atoms (+/-)
type tens, par        fm -> fm -> fm.   % Multiplicative
type one, bot         fm.               % Mult. units
type with, plus       fm -> fm -> fm.   % Additive
type top, zero        fm.               % Add. units
type eq, neq          tm -> tm -> fm.
type forall,exists    (tm -> fm) -> fm. % quantifiers

kind fmlist  type.
type nil     fmlist.
type ::      fm -> fmlist -> fmlist. % written infix

type mnr     fm -> fmlist -> fmlist -> oo.
mnr X (X :: L) L.
mnr X (Y :: L) (Y :: K) :- mnr X L K.

type app     fmlist -> fmlist -> fmlist -> oo. % "append"
app nil L L.
app (X :: J) K (X :: L) :- app J K L.

type seq     fmlist -> o.

% exchange
seq (X :: Del) :- mnr X Gam Del, seq Gam.

% initial
seq (patom A :: natom A :: nil).

% multiplicative
seq (one :: nil).
seq (tens A B :: Gam) :-
  app DelA DelB Gam, seq (A :: DelA), seq (B :: DelB).
seq (bot :: L) :- seq L.
seq (par A B :: Gam) :- seq (A :: B :: Gam).

% additive
seq (top :: Gam).
seq (with A B :: Gam) :- seq (A :: Gam), seq (B :: Gam).
seq (plus A B :: Gam) :- seq (A :: Gam).
seq (plus A B :: Gam) :- seq (B :: Gam).

% equality
seq (eq T T :: Gam) :- seq Gam.
seq (neq S T :: Gam) :- (S = T) => seq Gam.

% quantifiers
seq (forall A :: Gam) :- pi x\ (A x :: Gam).
seq (exists A :: Gam) :- sigma t\ (A t :: Gam).
\end{lstlisting}

\end{document}